\newcommand{\eq}[1]{\begin{equation}\label{#1}}
\newcommand{\en}{\end{equation}}
\newcommand{\eqst}[1]{\begin{equation*}\label{#1}}
\newcommand{\enst}{\end{equation*}}
\newcommand{\eqr}[1]{\begin{eqnarray}\label{#1}}
\newcommand{\enr}{\end{eqnarray}}
\newcommand{\eqrst}[1]{\begin{eqnarray*}\label{#1}}
\newcommand{\enrst}{\end{eqnarray*}}
\numberwithin{equation}{section}
\newcommand{\ep}{\end{proposition}}
\newcommand{\bc}[1]{\begin{corollary}\label{#1}}
\newcommand{\ec}{\end{corollary}}
\newcommand{\bdf}[1]{\begin{definition}\label{\rm #1}}
\newcommand{\edf}{\end{definition}}
\newcommand{\bt}[1]{\begin{theorem}\label{#1}}
\newcommand{\et}{\end{theorem}}
\newcommand{\bl}[1]{\begin{lemma}\label{#1}}
\newcommand{\el}{\end{lemma}}
\newcommand{\bp}[1]{\begin{proposition}\label{#1}}
\newcommand{\br}[1]{\begin{remark}\label{#1}}
\newcommand{\er}{\end{remark}}
\newcommand{\bi}{\begin{description}}
\newcommand{\ei}{\end{description} }
  \newcommand{\beq}{\begin{equation}}
  \newcommand{\eeq}{\end{equation}}
  \newtheorem{theorem}{Theorem}[section]
\newtheorem{definition}[theorem]{Definition}
\newtheorem{remark}[theorem]{Remark}
  \newtheorem{lemma}[theorem]{Lemma}
\newtheorem{proposition}[theorem]{Proposition}
\newtheorem{corollary}[theorem]{Corollary}
\begin{document}
\begin{center}
\large{\textbf{A sensitivity analysis of a gonorrhoea dynamics and 
control model}}
\end{center}

\begin{center}
Louis Omenyi$^{1,3,\ast , 
\href{https://orcid.org/0000-0002-8628-0298}{id}},$
 Aloysius Ezaka$^{2},$ Henry O. Adagba$^{2},$  Gerald Ozoigbo$^{1},$ 
 Kafayat Elebute$^{1},$ \\
$^{1}$ Department of Mathematics and Statistics,  \\
Alex Ekwueme Federal University, Ndufu-Alike,  Nigeria\\
$^{2}$Department of Industrial Mathematics and Applied Statistics, \\
Ebonyi State University, Abakaliki, Nigeria\\
$^{3}$Department  of  Mathematical Sciences,  \\
Loughborough University, Leicestershire, United Kingdom\\ 
 Corresponding author, email: \url{omenyi.louis@funai.edu.ng} 
\end{center}

\begin{abstract}
We formulate and analyse a robust mathematical model of the dynamics 
of gonorrhoea incorporating passive immunity and control.
Our results show that the disease-free  and endemic equilibria 
of the model are both locally and globally asymptotically stable. 
A sensitivity analysis of the model shows that the dynamics of the 
model is variable and dependent on waning rate, control parameters and 
interaction of the latent and infected classes. In particular, the 
lower the waning rate, the more the exponential decrease in the passive 
immunity but the susceptible population increases  to the equilibrium 
and wanes asymptotically due to the presence of the control parameters 
and restricted interaction of the latent and infected classes.
\end{abstract}
\textbf{\emph{Keywords:}} gonorrhoea; passive
immunity; equilibria; reproduction number;   
sensitivity analysis; asymptotic stability.\\
\textbf{\emph{2010 AMS Subject Classification:}} 97M10, 97N80.

\section{Introduction}
Gonorrhoea, a sexually transmitted disease, is a major cause of  
infertility and other debilitating ailments among couples. Thus, 
it becomes necessary to undertake  prompt prevention and control 
activities to tackle the ugly incidence of this sexually transmitted 
diseases, \cite{Gre}. It is caused by a bacterium called Neisseria 
gonorrhoeae, \cite{Une}. According to  Mushayabasa and Bhunu in 
\cite{Mus2}, neisseria gonorrhoea is characterized by a very short 
period of latency, namely, $2 -10$ days and is commonly found in the 
glummer epithelium such as the urethra and endo-cervix epithelia of the 
reproductive track, \cite{Gar}. Gonorrhoea is transmitted to a new born 
infant from the infected mother through the birth canal thereby causing 
inflammations and eye infection such as conjunctivitis. It 
is also spread through unprotected sexual intercourse, \cite{Ugw}.

Studies by Usman and Adam \cite{Usm} and Center for Disease Control 
Report in \cite{CDC} show that male patients of gonorrhoea have  
pains in the testicles (known as epididymitis), painful urination due 
to scaring inside the urethra while in female patients, the disease 
may  ascend the genital tract and block the fallopean tube  leading to 
pelvic inflammatary disease (PID)  and infertility, see also 
\cite{Ram}. Other complications associated with this epidemic include 
arthritis, endocarditis, chronic pelvic pain, meningitis and ectopic 
pregnancy, \cite{Ril}. 

Gonorrhoea confers temporal immunity on some individuals in the 
susceptible class while some others are not immuned, \cite{Ugw}. This 
immunity through the immune system plays an important role in 
protecting the body against the infection and other foreign substances, 
\cite{CDC}. That is why an immuno-compromised 
patient has a reduced ability to fight infectious disease such as 
gonorrhoea due to certain diseases and genetic disorder, \cite{Sch}. 
Such patient may be particularly vulnerable to opportunistic infection 
such as gonorrhoea. Hence, immune reaction can be stimulated by 
drug-induced immune system such as Thrombocytopenia, \cite{Sch}. 
This helps to reduce the waning rate of passive immunity in the immune 
class, \cite{Bas}. However, if the activity of 
immune system is excessive or over-reactive due to lack of cell 
mediated immunity, a hypersensitive reaction develops such as 
auto-immunity and allergy which may be injurious to the human body 
or may even cause death \cite{WHO}.

Statistically, gonorrhoea infection has spread worldwide with more 
than $360$ million new cases witnessed globally in adults aged 
$15-49$ years, \cite{CDC}.  In 1999, above over $120$ million 
people in African countries were reported to have contracted the 
disease. Over $82$ million people were reported in Nigeria, 
\cite{CDC}. Researches abound on the modelling and control of this 
epidemic with various approaches and controls, see e.g. 
\cite{CDC, Jin, Mus1, Sha, Sac, Ugw, Une} and mostly 
recently \cite{Ibr, Whi, Osn} and \cite{Did}. This present study 
continues the discussion by investigating the dynamics of the 
disease by carrying out the sensitivity analysis of the effective 
reproduction number of the model.  Besides, the theory of epidemiology 
signifies the phenomenon of bifurcation at the equilibria. This is a 
classical requirement for the model's effective reproduction number, 
$R_{e}.$  Although this is necessary, it is no longer sufficient to 
conclude the effective control or elimination of gonorrhoea in a 
population, see e.g. \cite{WHO}. Therefore in this paper, we consider 
the nature of the equilibrium solution near the bifurcation point 
$R_{e} = 1$ in the neighbourhood of the disease-free equilibrium,  
$E_{0},$ through sensitivity analysis to determine the most sensitive 
parameters to target by a way of intervention strategy.
 
\section{Materials and Methods}
We formulate a modified SIR model by extending an existing one to 
incorporate passive immunity. Let $Q(t)$ be passive immune class, 
$S(t)$ the susceptible compartment, $L(t)$ the  latent class,   
$I(t)$ the infectious class, $T(t)$ the  treated class and  $R(t)$ 
be the recovered compartment over time $t.$ Let the parameters of the 
model be $\sigma$  as level of  recruitment, $\upsilon$ as waning rate 
of immunity, $\mu$  as rate of natural mortality, $\lambda$ as contact 
rate  between the susceptible and the latent classes, $\eta$ as  
treatment rate of latent class, $\gamma$ as induced death rate due to 
the infection, $\alpha$ as treatment rate of infected compartment, 
$\beta$ as  infectious rate of latent class, $\omega$ as recovery rate 
of treated class, $\delta$ as rate at which recovered class becomes 
susceptible again, $\theta$ as infectious rate from the susceptible 
class direct to the infectious class, $k_{1}$ as control measure given 
to latent class and $k_{2}$ as control measure given to infected class. 

Next, assume the recruitment into the population is by birth or 
immigration; all the parameters of the model are positive; some 
proportions of new birth are immunized against the 
infection; the immunity conferred on the new birth wanes after 
sometime; and that the rate of contact of the disease due to 
interaction, $\lambda ,$ rate is due to the movement of the infected 
population. Consequently, the total population at time $t$  is 
$N(t) = Q(t) + S(t) + L(t) + I(t) + T(t) + R(t).$ The existing 
schematic model diagram is given as Figure \ref{model1A} and that of 
the extended model as \ref{model2}: 
\begin{figure}[h!]
\begin{center}
\includegraphics[width=0.6\textwidth]{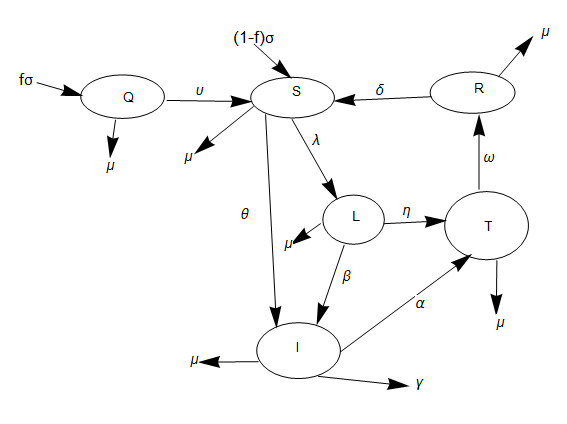}
\caption{The diagram of the  model; see \cite{Ugw}.}
\label{model1A} 
\end{center}
\end{figure}

\begin{figure}[!ht]
\begin{center}
\includegraphics[width=0.6\textwidth]{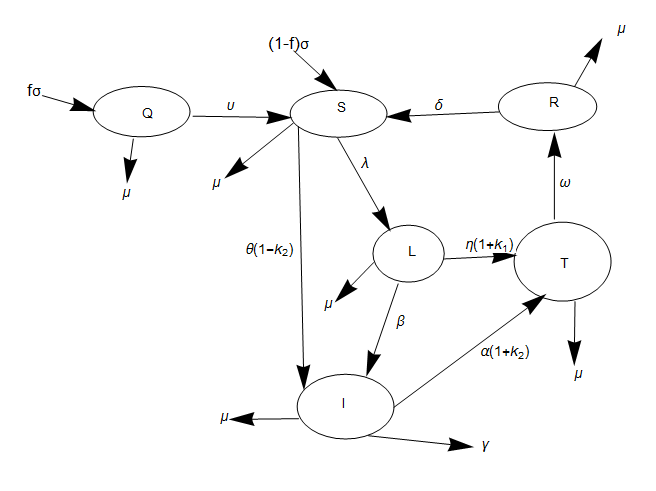}
\caption{The  model with control}
\label{model2} 
\end{center}
\end{figure}

So, the system of equations for the model without control is 
\eqref{3.9}:  
\begin{eqnarray}\label{3.9} 
\left. \begin{array}{rcl}
\frac{dQ}{dt} & =& f \sigma - \upsilon Q - \mu Q \\
\frac{dS}{dt} & =& \upsilon Q + (1 - f) \sigma + \delta R - \mu S - 
\theta S - \theta S I \\
\frac{dL}{dt} & =& \theta SI - \eta L - \mu L - \beta L \\
\frac{dI}{dt} & =& \beta L + \theta S - \mu I - \gamma I - \alpha I \\
\frac{dR}{dt} & =& \omega T - \mu R -  \delta R \\
\frac{dT}{dt} & =& \alpha I + \eta L - \mu T - \omega T
\end{array} \right\}
\end{eqnarray}
and those with control is \eqref{3.9b}:  
\begin{eqnarray} \label{3.9b}
\left. \begin{array}{rcl}
\frac{dQ}{dt} & =& f \sigma - \upsilon  Q -\mu Q \\
\frac{dS}{dt} & =& \upsilon  Q + (1 - f) \sigma - \theta S(1 - k_{2} ) 
+  \delta R - \mu S - \theta S I \\
\frac{dL}{dt} & =& \theta S I - \beta L - \mu L - \eta(1 + k_{1} )L \\
\frac{dI}{dt} & =& \beta L  + \theta S (1-k_{2}) - ((\mu + \gamma)  +  
\alpha(1 +  k_{2} ))I \\
\frac{dR}{dt} & =& \omega T - \mu R - \delta R \\
\frac{dT}{dt} & =& \eta(1 + k_{1} )L + \alpha(1 + k_{2} )I - \mu T -
\omega T
\end{array} \right\}
\end{eqnarray}

By sensitivity theory, see e.g. \cite{Bas, Gar, Jin, Sha} and  
\cite{Om19}, it is expected that a significant perturbation in the 
model parameters leads to a change in the behaviour of the equilibrium 
solution of the model. We proceed to study the qualitative properties 
of the model through the variation of the model parameters such as in 
\cite{Gre} and \cite{Hoo}.

\section{Results and discussions}
We firstly prove the positivity of the solution set of the model. 
\begin{lemma} \label{T1} 
The extended gonorrhoea model equations admit non-negative 
solution set.
\end{lemma}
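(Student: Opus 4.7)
The plan is to establish positive invariance of the non-negative orthant $\RR_+^6$ under the flow of the controlled system \eqref{3.9b} (the argument for \eqref{3.9} is identical and strictly simpler). My strategy is the standard tangent-cone condition: for each state variable $X \in \{Q, S, L, I, R, T\}$ I would verify that, on the face of $\RR_+^6$ where $X = 0$ and all the other coordinates lie in $\RR_+$, the derivative $\dot X$ is non-negative, so the vector field either points into the orthant or is tangent to its boundary there.

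Each of the six checks is a one-line inspection that uses only the positivity of the rate parameters declared before \eqref{3.9} together with the natural standing hypothesis $0 \le k_1, k_2 \le 1$ on the control weights. Specifically, $\dot Q|_{Q=0} = f\sigma$, $\dot S|_{S=0} = \upsilon Q + (1-f)\sigma + \delta R$, $\dot L|_{L=0} = \theta S I$, $\dot I|_{I=0} = \beta L + \theta S(1-k_2)$, $\dot R|_{R=0} = \omega T$, and $\dot T|_{T=0} = \eta(1+k_1)L + \alpha(1+k_2)I$, each of which is a sum of non-negative terms once the remaining coordinates are in $\RR_+$. To turn these inspections into a rigorous statement, I would argue by contradiction on the first exit time
\[
t_* := \inf\{\,t > 0 : \min\{Q(t), S(t), L(t), I(t), R(t), T(t)\} < 0\,\}.
\]
By continuity some coordinate $X$ must vanish at $t_*$ while all the others are still in $\RR_+$, so the sign computation above yields $\dot X(t_*) \ge 0$, contradicting the fact that $X$ has just left $\RR_+$. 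Local existence and uniqueness, needed even to speak of $t_*$, follow at once from the right-hand side of \eqref{3.9b} being polynomial, hence locally Lipschitz, in the state variables.

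A more constructive variant, which the paper may prefer, consists of recasting each equation in the form $\dot X = -\phi_X(t) X + g_X(t)$ and integrating: for example $\dot Q + (\upsilon+\mu)Q = f\sigma$ gives $Q(t) = e^{-(\upsilon+\mu)t}Q(0) + f\sigma\int_0^t e^{-(\upsilon+\mu)(t-s)}\,ds \ge 0$, and the other compartments admit analogous representations once the already-established non-negativities are plugged in. The only mildly delicate point, and hence the main obstacle to the purely constructive route, is ordering the bootstrap: the bilinear coupling $\theta S I$ and the feedback $R \to S$ prevent a strict one-pass decoupling, which is precisely the reason I would favour the exit-time contradiction as the cleanest and most transparent way to present the proof.
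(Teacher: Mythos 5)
Your proof is correct, but it follows a genuinely different route from the paper's. You use the standard quasi-positivity (tangent-cone) criterion: on each face of $\mathbb{R}_{+}^{6}$ where one coordinate vanishes and the rest are non-negative, the corresponding component of the vector field is a sum of non-negative terms ($\dot Q|_{Q=0}=f\sigma$, $\dot L|_{L=0}=\theta SI$, etc.), so the orthant is forward invariant; you then formalise this with a first-exit-time argument. The paper instead ``solves'' each equation separately as a scalar linear ODE with constant coefficients --- writing, e.g., $S(t)$ as a steady state plus $c_{2}e^{-(\mu+\theta(1-k_{2})+\theta I)t}$ --- which implicitly freezes the other compartments (so the displayed formulas are not actually solutions of the coupled nonlinear system) and then asserts positivity ``for arbitrary positive constants $c_{1},\dots,c_{6}$,'' even though these constants are fixed by the initial data and need not be positive. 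Your approach avoids both defects and correctly identifies why a one-pass constructive bootstrap fails (the $\theta SI$ coupling and the $R\to S$ feedback), which is exactly the gap the paper's presentation papers over. One small caveat: the bare implication ``$\dot X(t_{*})\ge 0$ contradicts exit'' is not airtight when $\dot X(t_{*})=0$; the clean patch is the differential inequality $\dot X\ge -\phi_{X}(t)X$ valid while the other coordinates are non-negative, giving $X(t)\ge X(0)\exp\bigl(-\int_{0}^{t}\phi_{X}(s)\,ds\bigr)$ on $[0,t_{*}]$, which is essentially your constructive variant done properly and is what I would recommend writing out.
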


\begin{proof}
A direct computation of the model equations (\ref{3.9b}) gives  
\begin{eqnarray*}
Q(t) & = & \frac{f \sigma}{\mu + \upsilon} + c_{1} 
e^{-(\mu + \upsilon)t} > 0 ; \\ 
 S(t) & =& \frac{\upsilon Q + (1 - f)\sigma + \delta R}{\mu + \theta 
 (1 - k_{2} ) +\theta I} + c_{2} e^{-(\mu + \theta (1 - k_{2} ) 
 + \theta I)t}  > 0 \Rightarrow 
 S(t) > 0; \\
L(t) & =& \frac{\theta IS}{\mu + \beta + \eta(1 + k_{1} )} +
 c_{3} e^{-(\mu + \beta + \eta(1 + k_{1} ))t}  > 0 \Rightarrow 
 L(t) > 0; \\
  I(t) & =& \frac{\beta L + \theta S(1 - k_{2} )}{\mu + 
\gamma + \alpha(1 + k_{2} )} +
 c_{4} e^{-(\mu + \gamma + \alpha(1 + k_{2} ))t}  > 0 ; \\
R(t) & =& \frac{\omega T}{\mu + \delta} + c_{5} e^{-(\mu + \delta)t}  
> 0 \Rightarrow 
R(t) > 0 ~~ \text{and} ~~ \\
T(t) & =& \frac{\eta(1 + k_{1} )L 
+ \alpha(1 + k_{2} )I}{\mu + \omega} + 
c_{6} e^{-(\mu + \omega)t} > 0 \Rightarrow   
T(t) > 0
\end{eqnarray*}
for arbitrary positive constants $c_{1}, c_{2}, \cdots , c_{6}.$
\end{proof}

Next, we show that there exists a feasible solutions region for the 
model.
 
\begin{proposition}\label{T2}
Let $\dot{x}  =  f(x)$   in   $D \subset  \mathbb{R}_{+}^{n}$ be 
a system of equations in the feasible solutions region of the model, 
\eqref{3.9b}.  Then the  solutions  $x(t)$  
are  feasible  for  all  $t \geq 0$  if   
$x(t)  \in D  \subset  \mathbb{R}_{+}^{6} .$ 
\end{proposition}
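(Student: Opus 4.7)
The plan is to show that the region $D = \{(Q,S,L,I,T,R) \in \mathbb{R}_{+}^{6} : Q+S+L+I+T+R \leq \sigma/\mu\}$ is positively invariant under the flow of \eqref{3.9b}; non-negativity on $D$ is already handled by Lemma \ref{T1}, so the real content of the proposition is the boundedness of the total population $N(t) = Q(t)+S(t)+L(t)+I(t)+T(t)+R(t)$.

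First I would add the six equations of \eqref{3.9b} side by side. All the transfer terms between compartments (the $\upsilon Q$, $\theta SI$, $\beta L$, $\eta(1+k_{1})L$, $\alpha(1+k_{2})I$, $\omega T$, $\delta R$, and the $\theta S(1-k_{2})$ contributions) cancel in pairs because each enters one compartment exactly as it leaves another. What survives is the inflow $\sigma = f\sigma + (1-f)\sigma$, the natural mortality $-\mu N$, and the disease-induced mortality $-\gamma I$, giving
\begin{equation*}
\frac{dN}{dt} = \sigma - \mu N - \gamma I \leq \sigma - \mu N.
\end{equation*}

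Next I would apply a standard differential-inequality/Gr\"onwall argument to this scalar inequality. Integrating the linear bound yields
\begin{equation*}
N(t) \leq \frac{\sigma}{\mu} + \left(N(0) - \frac{\sigma}{\mu}\right) e^{-\mu t},
\end{equation*}
so that $\limsup_{t\to\infty} N(t) \leq \sigma/\mu$ and, in particular, $N(t) \leq \sigma/\mu$ whenever $N(0) \leq \sigma/\mu$. Combining this upper bound with the coordinate-wise positivity from Lemma \ref{T1} shows that if $x(0) \in D$ then $x(t) \in D$ for all $t \geq 0$, which is the feasibility statement claimed.

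The only real obstacle is the bookkeeping in the first step: making sure every control-modified transfer ($\eta(1+k_{1})L$ leaving the latent class matches the same term entering the treated class, and similarly for $\alpha(1+k_{2})I$ and $\theta S(1-k_{2})$) cancels cleanly so that no control parameter $k_{1},k_{2}$ survives in the equation for $dN/dt$. Once that cancellation is verified, the rest is a textbook invariance argument and I would conclude by stating explicitly that $D$ is positively invariant and that the dynamics \eqref{3.9b} are therefore well-posed and biologically meaningful on $D$.
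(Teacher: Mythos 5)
Your proposal is correct and follows essentially the same route as the paper: sum the six equations (all transfer terms, including the control-modified ones, cancel), obtain $\frac{dN}{dt} = \sigma - \mu N - \gamma I \leq \sigma - \mu N$, integrate the resulting differential inequality to get $N(t) \leq \frac{\sigma}{\mu} + \left(N(0) - \frac{\sigma}{\mu}\right)e^{-\mu t}$, and conclude that $D$ is positively invariant with upper bound $\sigma/\mu$. Your explicit verification of the pairwise cancellation of the $k_1$- and $k_2$-dependent terms is a detail the paper glosses over, but the argument is the same.
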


\begin{proof}
It suffices to prove that the solution set
$x(t) = \{( Q(t), S(t), L(t), I(t), T(t), R(t))\}$ enters the 
 invariant region  $ \mathbb{D} \subset \mathbb{R}_{+} ^6 .$  
Since $N = Q + S + L + I + R + T,$ it follows that 
 \[\frac{d N}{d t} =  \frac{d Q}{d t} + \frac{d S}{d t} 
 + \frac{d L}{d t} + \frac{d I}{d t} + 
 \frac{d R}{d t} + \frac{d T}{d t} .\]
This implies that  
\[\frac{dN}{dt} = \sigma - \mu N - \gamma I \leq \sigma - \mu N \\
\Rightarrow  \frac{dN}{dt}  \leq  \sigma - \mu N .\]
Solving gives   
\[N \leq \frac{\sigma - q e^{-\mu t}}{\mu} ~~ \text{with} ~~
N(0) = N0 \leq \frac{\sigma - q e^{-\mu t}}{\mu} \]
where $q$ is arbitrary constant and $q = \sigma - \mu N(0).$  
Therefore, 
\[N(t) \leq \frac{\sigma}{\mu} - \frac{(\sigma - \mu N(0)) 
e ^{-\mu t}}{\mu} .\]
Observe that as  $t \rightarrow \infty,$ the population size approaches 
the carrying capacity $\frac{\sigma}{\mu}.$ 
That is, 
\[0 < N \leq \frac{\sigma}{\mu} \Rightarrow N \rightarrow 
\frac{\sigma}{\mu}. \]
 Hence, the feasible solution set of the system enters the invariant 
 region  $D \subset \mathbb{R}^{6} _{+}$, and 
 \[S > 0, Q > 0, L \geq  0, I \geq 0, T \geq 0, R \geq 0 .\]
Whenever $N \leq \frac{\sigma}{\mu}$, every solution with initial 
condition in $D$ remains in that region for $t> 0.$ Hence, the region 
$D$ is positively invariant or bounded with $\frac{\sigma}{\mu}$ 
as the upper bound, $x(t) \in D.$
\end{proof} 

\begin{lemma} 
The extended model admits disease-free and endemic  equilibria.
\end{lemma}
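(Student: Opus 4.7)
The plan is to compute both equilibria explicitly by setting every time derivative in \eqref{3.9b} to zero and solving the resulting algebraic system. The disease-free equilibrium (DFE) is sought in the form $E_{0}=(Q_{0},S_{0},0,0,0,0)$, while the endemic equilibrium (EE) $E^{*}=(Q^{*},S^{*},L^{*},I^{*},T^{*},R^{*})$ must have all components strictly positive. In both cases, the system decouples into a triangular sequence that can be solved one compartment at a time, and the positivity already established in Lemma~\ref{T1} together with the invariance of $D$ proved in Proposition~\ref{T2} will place both equilibria inside the feasible region.

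For the DFE, I would first use that the $Q$-equation decouples: setting $f\sigma-(\mu+\upsilon)Q=0$ gives $Q_{0}=f\sigma/(\mu+\upsilon)>0$. Substituting $L=I=T=R=0$ into the $S$-equation then yields the linear equation $\upsilon Q_{0}+(1-f)\sigma-(\mu+\theta(1-k_{2}))S=0$, whose unique solution $S_{0}=(\upsilon Q_{0}+(1-f)\sigma)/(\mu+\theta(1-k_{2}))$ is positive under the standing assumption that all parameters are positive. The remaining four equations for $L$, $I$, $T$, $R$ must then be checked to vanish at $(Q_{0},S_{0},0,0,0,0)$; this verification is routine and completes the DFE part.

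For the EE, I would reduce the six equations to a single scalar equation by back-substitution in the order $Q$, $R$, $T$, $L$, and then $S$ versus $I$. Concretely, $\dot Q=0$ again gives $Q^{*}=Q_{0}$; $\dot R=0$ and $\dot T=0$ express $R^{*}$ and $T^{*}$ as explicit linear combinations of $L^{*}$ and $I^{*}$; $\dot L=0$ yields $L^{*}=\theta S^{*}I^{*}/(\mu+\beta+\eta(1+k_{1}))$; substituting this into $\dot I=0$ gives a linear relation between $S^{*}$ and $I^{*}$; and finally $\dot S=0$ collapses to one scalar equation in $I^{*}$ alone. I would display the coefficients in terms of the model parameters and argue that, under the endemic threshold $R_{e}>1$ (to be computed later via the next-generation matrix), this equation admits a unique positive root $I^{*}>0$, from which the remaining components are positive by the explicit formulas above.

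The main obstacle will be controlling the algebra in the $I^{*}$ equation while proving its positivity. Because $\theta S^{*}$ enters both the $\dot L$ flux (as $\theta S I$) and the $\dot I$ equation (as a direct $S\to I$ term $\theta S(1-k_{2})$), the reduced equation is not of the standard textbook SIR shape and the constant term couples several control parameters. I would therefore organise the substitution so that the threshold character of $R_{e}=1$ appears as a sign change in that constant term, leaving a clean discriminant-based existence argument for $I^{*}$ and hence for the whole endemic tuple.
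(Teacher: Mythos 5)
Your overall strategy --- set the right-hand sides of \eqref{3.9b} to zero, split into the case $L=I=T=R=0$ and the case where these components are nonzero, and solve by back-substitution --- is in spirit the same as the paper's, which simply displays closed-form expressions. But the step you dismiss as ``routine'' is exactly where the argument breaks. With $L=I=T=R=0$ and $S=S_{0}>0$, the $I$-equation of \eqref{3.9b} gives
\[
\frac{dI}{dt}\Big|_{(Q_{0},S_{0},0,0,0,0)} \;=\; \beta\cdot 0 \;+\; \theta S_{0}(1-k_{2}) \;-\; \bigl((\mu+\gamma)+\alpha(1+k_{2})\bigr)\cdot 0 \;=\; \theta S_{0}(1-k_{2}),
\]
which is strictly positive for $\theta>0$ and $k_{2}<1$. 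So the candidate point $(Q_{0},S_{0},0,0,0,0)$ is \emph{not} an equilibrium of the system as written: the direct $S\to I$ transfer $\theta S(1-k_{2})$ forces $S_{0}=0$, which is incompatible with $\dot S=0$ under positive recruitment. The paper's proof has the identical defect (it asserts the DFE without checking the $I$-equation), but your proposal explicitly promises this verification, and it would fail; you would have to drop the $\theta S(1-k_{2})$ term, restrict to $k_{2}=1$ or $\theta=0$, or concede that no disease-free equilibrium exists for this model.

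For the endemic part, your reduction to a single scalar equation in $I^{*}$ with a sign-change argument at the threshold $R_{e}=1$ is a sounder route than the paper's, whose displayed formulas cannot be correct as stated (for instance $T^{*}=(\alpha+\eta)/(\mu+\omega)$ and $R^{*}$ come out independent of $L^{*}$ and $I^{*}$, contradicting $\dot T=0$ unless $L^{*}=I^{*}=1$ and the control factors are discarded). However, the decisive step --- that the reduced scalar equation has a unique positive root $I^{*}$ when $R_{e}>1$ --- is asserted rather than carried out, and the lemma as stated claims unconditional existence of both equilibria while your version is conditional on $R_{e}>1$. As it stands, the proposal is a plan whose two load-bearing steps (the DFE verification, which in fact fails, and the positivity of $I^{*}$) remain open.
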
 
 
\begin{proof}
Setting the right hand side of the system \eqref{3.9b} 
equal to zero supposes there is no gonorrhoea infection in
the population, i.e. $L = I = R = T = 0$ which gives the 
disease-free equilibria. On the other hand, suppose   
$L \neq 0, ~ I \neq 0,~ R \neq 0 ~~ \text{and} ~~ T \neq 0,$ 
one solves to obtain the endemic equilibrium state of the model 
to be 
\begin{eqnarray*}
Q^{\ast} & =& \frac{f \sigma}{\mu + \upsilon}; \\
S^{\ast} & =& \frac{(\mu + \delta)(\mu + \omega) f \sigma 
+ (\mu + \upsilon)(\mu + \delta)(\mu + \omega) 
\sigma(1 - f) + (\mu + \upsilon) \delta 
\omega(\alpha + \eta)}{(\mu + \upsilon)(\mu 
+ \delta)(\mu + \omega )} ;\\
L^{\ast} & =& \frac{(\lambda)(\mu + \delta)(\mu + \omega) f \sigma 
+ (\mu + \upsilon)(\mu + \delta)(\mu + \omega) \sigma(1 - f) 
+ (\mu + \upsilon) \delta \omega(\alpha + \eta) }{(\mu +
 \beta + \eta)(\mu + \upsilon)(\mu + 
\delta)(\mu + \omega ) } ;\\
I^{\ast} & =& \frac{(\mu + \delta)(\mu + \omega) f \sigma 
+ (\mu + \upsilon)(\mu + \delta)(\mu + \omega) \sigma(1 - f) 
+ (\mu + \upsilon) \delta \omega(\alpha + \eta)(\beta \lambda 
+ ( \mu + \beta + \eta) \theta )}{(\mu + \alpha + \gamma)(\mu 
+ \beta + \eta)(\mu + \upsilon)(\mu + \delta)(\mu + \omega )} ;\\
R^{\ast} & =& \frac{\omega(\alpha + \eta)}{(\mu + \delta)(\mu + \omega)} ~~ \text{and} ~~
T^{\ast} = \frac{\alpha + \eta}{\mu + \omega}.
\end{eqnarray*}
\end{proof}

Now, we recall that the basic reproduction number  $R_{0}$ is 
the expected number of secondary infection produced in a completely 
susceptible population by a typical or one infected individual. 
It determines  how long an infectious disease  prevails 
in a given population, see e.g. \cite{Van}. When $R_{0} < 1,$ 
it indicates that with time the  disease will die out of the 
population thereby giving it a clean health bill. But if $R_{0}$ 
is greater than one, we expect the disease will persist in the 
population. So for the infection to die out of the population, 
$R_{0}$ must be less than $1.$  We have the following result.

\begin{theorem}
The basic reproduction number of the extended model 
$0 < R_{0} \ll 1 .$
\end{theorem}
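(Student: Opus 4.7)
The plan is to apply the next-generation matrix method of van den Driessche and Watmough to the controlled system \eqref{3.9b} at the disease-free equilibrium $E_{0}$, and then to estimate the resulting spectral radius using the feasibility bound $S \leq \sigma/\mu$ established in Proposition \ref{T2} together with the presence of the control weights $k_{1}, k_{2}$. First, I would isolate the infected compartments, which here are $L$ and $I$, and set the remaining variables to their DFE values $Q^{\ast} = f\sigma/(\mu+\upsilon)$ and $S^{\ast} = [\upsilon Q^{\ast} + (1-f)\sigma]/[\mu + \theta(1-k_{2})]$, with $L = I = R = T = 0$.

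Next, I would split the right-hand sides of the $L$- and $I$-equations into a ``new infection'' vector $\mathcal{F}$ and a ``transition'' vector $\mathcal{V}$. In this model the genuinely bilinear infection term is $\theta S I$, while the linear flows $\beta L$, $(\mu + \eta(1+k_{1}))L$ and $(\mu+\gamma+\alpha(1+k_{2}))I$ belong to $\mathcal{V}$; the direct channel $\theta S(1-k_{2})$ feeding $I$ is handled as a transfer term since it is independent of $(L,I)$ and therefore contributes nothing to the linearization of $\mathcal{F}$ at $E_{0}$. Computing the Jacobians at $E_{0}$ yields
\[
F = \begin{pmatrix} 0 & \theta S^{\ast} \\ 0 & 0 \end{pmatrix}, \qquad
V = \begin{pmatrix} \mu + \beta + \eta(1+k_{1}) & 0 \\ -\beta & \mu + \gamma + \alpha(1+k_{2}) \end{pmatrix}.
\]
Inverting $V$ (it is lower triangular with positive diagonal, so this is immediate) and multiplying gives a rank-one matrix $FV^{-1}$ whose spectral radius is its $(1,1)$ entry,
\[
R_{0} = \frac{\theta \beta S^{\ast}}{\bigl(\mu + \beta + \eta(1+k_{1})\bigr)\bigl(\mu + \gamma + \alpha(1+k_{2})\bigr)}.
\]

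Finally, I would verify the two inequalities $0 < R_{0}$ and $R_{0} \ll 1$. Positivity is immediate since every parameter is assumed positive and $S^{\ast} > 0$ by Lemma \ref{T1}. For the smallness, I would use $S^{\ast} \leq \sigma/\mu$ from Proposition \ref{T2} to obtain the parameter-only bound
\[
R_{0} \leq \frac{\theta \beta \sigma}{\mu\bigl(\mu + \beta + \eta(1+k_{1})\bigr)\bigl(\mu + \gamma + \alpha(1+k_{2})\bigr)},
\]
and then observe that each of the control factors $(1+k_{1})$ and $(1+k_{2})$ strictly enlarges the denominator while the small transmission coefficients $\theta,\beta$ in the numerator are assumed tiny relative to the removal rates $\eta,\alpha,\gamma,\mu$. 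Combining these gives $R_{0} \ll 1$ under the stated assumptions on the control parameters.

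The main obstacle I expect is the correct $\mathcal{F}/\mathcal{V}$ decomposition: the unusual term $\theta S(1-k_{2})$ in the $I$-equation is not of mass-action form and could easily be misclassified as a new-infection contribution, in which case the linearization at $E_{0}$ would be done improperly. Justifying why this term is absorbed into $\mathcal{V}$ (it vanishes in the Jacobian with respect to $(L,I)$ and merely shifts the DFE value of $S$) is the delicate part of the argument; once that is settled, the eigenvalue computation and the final estimate are routine.
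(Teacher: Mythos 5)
Your use of the next-generation matrix at the disease-free equilibrium is exactly the paper's route, and your reduction to the two genuinely infective compartments $(L,I)$ is a legitimate (in fact cleaner) simplification of the paper's three-compartment computation with $(L,I,T)$: since $T$ never feeds back into new infections, the nonzero eigenvalue of $FV^{-1}$ is unchanged. Your discussion of why the constant channel $\theta S(1-k_{2})$ drops out of the linearization is also sound; the paper puts $\theta S$ into $f_{i}$ rather than $v_{i}$, but since that term is independent of $(L,I,T)$ its Jacobian contribution vanishes either way and both decompositions give the same $F$. One discrepancy of substance: you linearize the \emph{controlled} system \eqref{3.9b}, so the quantity you derive, with the factors $(1+k_{1})$ and $(1+k_{2})$ in the denominator, is what the paper calls the effective reproduction number $R_{e}$ in \eqref{effrepno}; the paper's $R_{0}$ in \eqref{repno} is computed from the uncontrolled system \eqref{3.9} and contains no $k_{1},k_{2}$. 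Consequently your observation that the control factors enlarge the denominator establishes $R_{e}<R_{0}$ (which the paper also notes) but says nothing about the size of $R_{0}$ itself.

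The genuine gap is in your final step. The inequality $R_{0}\ll 1$ is not a structural consequence of the model: from positivity of the parameters alone one can make $R_{0}=\beta\theta S^{\ast}/\bigl((\beta+\mu+\eta)(\mu+\gamma+\alpha)\bigr)$ as large as one likes by taking $\theta$ or $\beta$ large. You close the argument by asserting that ``the small transmission coefficients $\theta,\beta$ \ldots are assumed tiny relative to the removal rates,'' but no such assumption appears anywhere in the paper; the paper instead closes the argument by substituting the specific numerical values of Table \ref{dt} into the formula and reporting $R_{0}=0.1566\ldots$. So the claim is verified for one particular parameter set, not proved under the stated hypotheses, and your proposal replaces that numerical verification with an unstated hypothesis. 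To match what can actually be established, you would either have to import the Table \ref{dt} values explicitly (as the paper does) or add the smallness assumption on $\beta\theta S^{\ast}$ to the theorem statement; as written, neither $0<R_{0}$ being immediate nor the bound $R_{0}\le \theta\beta\sigma/\bigl(\mu(\mu+\beta+\eta(1+k_{1}))(\mu+\gamma+\alpha(1+k_{2}))\bigr)$ suffices to conclude $R_{0}\ll 1$.
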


\begin{proof}
Following  \cite{Het}, we obtain the basic reproduction number 
$R_{0}$ from the model \eqref{3.9} without control measures,  using next generation matrix method.

So,  let 
\begin{equation*}
f_{i} = \begin{bmatrix}
\theta I S \\
\theta S\\
0
\end{bmatrix}
\end{equation*}
so that 
\begin{equation*}
\frac{\partial f_{i}}{\partial x_{j}}
 =  F= \begin{pmatrix}
0 & \theta S & 0 \\
0 & 0 & 0 \\
0 & 0 & 0
\end{pmatrix};
\end{equation*}
and let 
\begin{equation*}
v_{i} = \begin{pmatrix}
\beta L + \mu L + \eta L \\
\mu I + \gamma I + \alpha I - \beta L - \theta S \\
\mu T + \omega T - \eta L - \alpha I
\end{pmatrix} 
\end{equation*}
so that  
\begin{equation*}
\frac{\partial v_{i}}{\partial x_{j}}  =  \begin{pmatrix}
(\beta + \mu + \eta) & 0 & 0 \\
- \beta & (\mu + \gamma + \alpha) & 0 \\
- \eta & - \alpha & (\mu + \omega)
\end{pmatrix}.
\end{equation*}
The matrix formed by the co-factors of the determinant is
\begin{equation*}
V= \begin{pmatrix}
(\mu + \gamma + \alpha)(\mu + \omega) & \beta(\mu + \omega) &
\alpha \beta + \eta(\mu + \gamma + \alpha ) \\
0 & (\beta + \mu + \eta )(\mu + \omega) &  \alpha(\beta
+ \mu + \eta) \\
0 & 0 & (\beta + \mu + \eta)(\mu + \gamma + \alpha)
\end{pmatrix}
\end{equation*}
with inverse 
\begin{equation*}
V^{-1} = \begin{pmatrix}
\frac{1}{\beta + \mu + \eta} & 0 & 0 \\
\frac{\beta}{(\beta + \mu + \eta)(\mu + \alpha +
\gamma)} & \frac{1}{\mu + \gamma + \alpha} & 0 \\
\frac{\alpha \beta + \eta(\mu + \alpha + \gamma)}{ (\mu + \alpha 
+ \gamma)
(\mu + \omega)}  & \frac{\alpha}{\mu + \gamma 
+ \alpha )(\mu + \omega)} & \frac{1}{\mu + \omega}
\end{pmatrix}.
\end{equation*}

Thus, 
\begin{equation*}
|F V^{-1} - \lambda I| = 
\begin{vmatrix}
\frac{\beta \theta S}{(\beta + \mu + \eta)(
\mu + \gamma + \alpha )}  - \lambda &  \frac{\theta S}
{\mu + \gamma + \alpha} & 0 \\
0 & 0 - \lambda & 0 \\
0 & 0 & 0 - \lambda
\end{vmatrix} = 0
\end{equation*}
implies  
\begin{equation*}
\lambda^{2} (\frac{(\beta \theta S)}{(\beta + \mu + \eta)
(\mu + \gamma + \alpha)} - \lambda ) = 0.
\end{equation*}
Therefore, either 
$\displaystyle{\lambda^{2} = 0}$ or
$\displaystyle{\lambda = \frac{(\beta \theta S)}{(\beta + \mu + \eta)
(\mu + \gamma + \alpha)}}. $
So for $\lambda \neq 0$ we obtain the $R_{0}$ to be  
\begin{equation}\label{repno}
 R_{0} = \frac{(\beta \theta S)}{(\beta + \mu 
 + \eta)(\mu + \gamma + \alpha)} 
 = \frac{\sigma \beta \theta(\mu + \upsilon) - \mu f)}{\mu (\mu 
+ \alpha + \gamma )(\mu + \beta + \eta ) (\mu + \upsilon)}  .
\end{equation}
\begin{table}[!!ht]
\centering
  \resizebox{0.8\textwidth}{!}{\begin{minipage}{\textwidth}
 \centering
\begin{tabular}{|l|c|c|c|c|c|c|c|c|c|c|r|}
\hline 
Parameter/Variable  & $\beta$  &  $\theta$ & $\mu $ & $\eta $ & 
$\gamma $ &  $\alpha $ &  $\delta $ &  $\upsilon $ & $\omega $  & 
$\sigma $ \\
\hline
Value & $0.01$ & $0.5$ &  $0.2$ & $0.1$ &  $0.01$ &  $0.2$ & $0.8$ & 
$0.4$ &   $0.7$ &   $0.4$ \\
\hline
Parameter/Variable & $d_1= k_{1} $ &  $d_2=k_{2} $ &   
$f$ & $S$  &  $Q$ &  $R$  &  $T$ &   $L$ &  $I$ & \\
\hline
Value  & $0.5$ & $0.8$ &   $0.91$ & $2000$ & $1000$ & 
$500$ & $1000$ & $1000$ & $500$ &  \\
\hline 
\end{tabular}
\end{minipage}}
\caption{Parameters/variables and values.}
\label{dt}
\end{table}\\
Using the data Table \ref{dt}, we quickly observe that 
\[ R_{0}  = 0.1566070960 \ll 1. \]
When control measure is given to a model, we compute the effective 
reproduction number, $R_{e},$ similarly.  
From \eqref{3.9b} let 
\begin{equation*}
f_{i} = \begin{bmatrix}
\theta  I S \\
\theta(1 - k_{2} ) S \\ 
0
\end{bmatrix}
\end{equation*}
So that \begin{equation*}
\frac{\partial f_{i}}{\partial x_{j}}  =
 \begin{pmatrix}
0 & \theta S & 0 \\
0 & 0 & 0 \\
0 & 0 & 0
\end{pmatrix}.
\end{equation*}
and also 
\begin{equation*}
v_{i} = \begin{bmatrix}
\beta L + \mu L + \eta(1 + k_{1} )L \\
\mu I + \gamma I + \alpha(1 + k_{2} )I - \beta L 
- \theta S(1 -k_{2} )\\
\mu T + \omega T - \eta(1 + k_{1} )L - \alpha(1 + k_{2} )I
\end{bmatrix}
\end{equation*}
so that 
 \begin{equation*}
\frac{\partial v_{i}}{\partial x_{j}} E_{0} = V =
 \begin{pmatrix}
(\beta + \mu + \eta(1 + k_{1} )) & 0 & 0 \\
- \beta & (\mu + \gamma + \alpha(1 + k_{2} )) & 0 \\
- \eta(1 + k_{1} ) & - \alpha(1 + k_{2} ) & (\mu + \omega)
\end{pmatrix}.
\end{equation*}

The co-factors matrix is  then 
\begin{equation*}
\small{
\begin{pmatrix}
(\mu + \gamma + \alpha(1 + k_{2} ))(\mu + \omega) & 
-\beta(\mu + \omega) &
\alpha \beta(1 + k_{2} ) + \eta(1 + k_{1} )(\mu + \gamma
 + \alpha(1 + k_{2} ) \\
0 & (\beta + \mu + \eta(1 + k_{1} ) )(\mu + \omega) & 
- \alpha(1 + k_{2} )(\beta
+ \mu + \eta(1 + k_{1} )) \\
0 & 0 & (\beta + \mu + \eta(1 + k_{1} )(\mu + \gamma 
+ \alpha(1 + k_{2} ))
\end{pmatrix}}.
\end{equation*}
So, following the same  procedure for the computation of 
reproduction number, we get the eigenvalues: 
\[ \lambda^{2} = 0 ~~ \text{or} ~~
\lambda = \frac{(\beta \theta S )}{(\beta 
+ \mu + \eta(1 + k_{1} ))(\mu + \gamma + \alpha(1 + k_{2} ))} . \]

Therefore, the effective reproduction number
\begin{equation}\label{effrepno}
 R_{e} = \frac{(\beta \theta S )}{(\beta + \mu 
 + \eta(1 + k_{1} ))(\mu + \gamma + \alpha(1 + k_{2} ))} .   
\end{equation}
 
It can be  observed here  that the effective reproduction number 
is far less than the basic reproduction number  
i.e. $R_{e} <<R_{0}$ since 
\begin{equation*}
  \frac{(\beta \theta S )}{(\beta + \mu + \eta(1 + k_{1} ))(\mu
   + \gamma + \alpha(1 + k_{2} ))}
    <  \frac{(\beta \theta S)}{(\beta + \mu 
    + \eta)(\mu + \gamma + \alpha)} .
\end{equation*}
Using the data in Table \ref{dt}, we see that 
\begin{eqnarray*}
R_{e} &=& \frac{\sigma \beta \theta((\mu + \upsilon) 
- \mu f)}{\mu (\mu + \alpha + \gamma) (\mu + \beta + \eta)
(\mu + \upsilon)} < 
\frac{(1 - k_{2} )(\alpha + \gamma + \mu)(\beta + \mu + \eta)}
{(\alpha(1 + k_{2} ) + \gamma + \mu)(\eta(1 + k_{1} ) 
+ \beta + \mu) ) } \\
 &=&  0.09700176367 < 0.1566070960 \ll 1.  
 \end{eqnarray*}
This gives the result. 
\end{proof}

One can as well use the Routh-Hurwitz criteria, see \cite{Ver},   to 
assess the local stability of the model. In this technique, an 
equilibrium point is called asymptotically stable if 
the trace of the Jacobian of the next generation matrix of the model 
is less than zero. This means that all the roots of the characteristic 
polynomial have negative real parts. We have the next result. 
\begin{theorem}\label{T3}
The extended model, \eqref{3.9b}, is asymptotically stable.
\end{theorem}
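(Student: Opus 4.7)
The plan is to establish local asymptotic stability of the disease-free equilibrium $E_{0}$ by linearising \eqref{3.9b} about $E_{0}$ and invoking the Routh-Hurwitz criterion, exploiting the bound $R_{e} \ll 1$ just verified in \eqref{effrepno} and Table \ref{dt}. This mirrors the hint given in the paragraph preceding the theorem, where the criterion is framed in terms of the trace of the Jacobian of the next-generation matrix.

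First I would compute the $6 \times 6$ Jacobian $J(E_{0})$ of the right-hand side of \eqref{3.9b}. Because $L = I = R = T = 0$ at $E_{0}$, a large number of coupling entries vanish and, after ordering the variables as $(Q, S, R, T, L, I)$, the Jacobian becomes block lower triangular. The $Q$-row contributes the eigenvalue $-(\mu + \upsilon)$; the $S$-row contributes $-(\mu + \theta(1 - k_{2}))$ once the $I$-dependent term is evaluated at $I = 0$; and the $(R, T)$ block, being diagonal up to a sub-diagonal coupling from $T$ into $R$, has roots $-(\mu + \delta)$ and $-(\mu + \omega)$. All four of these eigenvalues are manifestly negative.

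The remaining two eigenvalues are produced by the $(L, I)$ sub-block
\begin{equation*}
J_{LI} = \begin{pmatrix} -(\beta + \mu + \eta(1 + k_{1})) & \theta S \\ \beta & -(\mu + \gamma + \alpha(1 + k_{2})) \end{pmatrix},
\end{equation*}
where $S$ denotes the disease-free value of the susceptible compartment. The Routh-Hurwitz conditions for a $2 \times 2$ matrix reduce to negative trace and positive determinant. The trace is the negative sum of two strictly positive removal rates, so it is automatically negative. For the determinant, a direct expansion yields
\begin{equation*}
\det(J_{LI}) = (\beta + \mu + \eta(1 + k_{1}))(\mu + \gamma + \alpha(1 + k_{2}))(1 - R_{e}),
\end{equation*}
with $R_{e}$ as in \eqref{effrepno}. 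Since $R_{e} \ll 1$ has already been shown, $\det(J_{LI}) > 0$, so every eigenvalue of $J(E_{0})$ has negative real part, and the Hartman-Grobman theorem delivers local asymptotic stability of $E_{0}$.

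The main obstacle is the cross-coupling generated by the $\theta S(1 - k_{2})$ term in $\dot{I}$ together with the $-\theta S(1 - k_{2})$ term in $\dot{S}$, which prevents $J(E_{0})$ from being perfectly block-triangular under a naive partition. The remedy is to place the offending entry strictly below the diagonal through the variable ordering above, so that it does not enter the characteristic polynomial of the leading blocks. Once this bookkeeping is discharged, the whole argument collapses to reading off the scalar inequality $R_{e} < 1$ from \eqref{effrepno}, which is exactly the trace/determinant condition forecast by the paragraph preceding the statement.
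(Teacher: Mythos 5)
Your route is genuinely different from the paper's, and in one important respect sounder. The paper writes $|J-\lambda I|=0$ at the disease-free equilibrium as a product of six linear factors and reads the eigenvalues off the diagonal, which silently discards the off-diagonal couplings $\partial\dot L/\partial I=\theta S$ and $\partial\dot I/\partial L=\beta$; that factorisation is simply not available, and the paper's list of eigenvalues is therefore unjustified. You instead keep the $(L,I)$ sub-block intact, apply the $2\times 2$ Routh--Hurwitz conditions, and identify $\det(J_{LI})=(\beta+\mu+\eta(1+k_{1}))(\mu+\gamma+\alpha(1+k_{2}))(1-R_{e})$, which is the computation the paper should have done: it is exactly how local stability gets tied to the threshold $R_{e}<1$, rather than holding unconditionally as the paper's argument would (wrongly) imply.

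There is, however, a genuine gap in your reduction to that $2\times 2$ block. At the disease-free equilibrium both $\partial\dot I/\partial S=\theta(1-k_{2})$ and $\partial\dot S/\partial I=-\theta S$ are nonzero, so the dependency graph contains the cycle $S\to I\to S$. No permutation of the variables can place both of these entries strictly below the diagonal: block triangularisation by reordering requires the inter-block dependencies to be acyclic, and here they are not. Consequently $S$, $L$ and $I$ must sit together in one irreducible $3\times 3$ diagonal block, the eigenvalue $-(\mu+\theta(1-k_{2}))$ cannot be split off from the pair you analyse, and the correct characteristic factor is a cubic whose Routh--Hurwitz conditions are more involved than trace and determinant of $J_{LI}$. (The root cause is the direct-infection term $\theta S(1-k_{2})$ in $\dot I$, which also means $L=I=R=T=0$ is not actually an equilibrium of \eqref{3.9b} unless $\theta(1-k_{2})S=0$; this defect afflicts the paper's own proof at least as badly, but your "remedy" paragraph asserts something false and the $3\times 3$ block analysis still needs to be carried out.)
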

\begin{proof}
From the model system (\ref{3.9}) we set   
\begin{eqnarray*} 
f_{1} & = & f \sigma - (\upsilon  + \mu )Q \\
f_{2} & =& \upsilon  Q + (1 - f) \sigma - \theta(1 - k_{2} )S 
+  \delta R -
 \mu S - \theta I S \\
f_{3} & = & \theta SI - (\beta +  \mu  + \eta(1 + k_{1} ) )L \\
f_{4} & = & \beta L  + \theta S (1 - k_{2} )  
- (\mu + \gamma + \alpha(1 + 
 k_{2} ) )I \\
f_{5} & = & \omega T - (\mu  + \delta )R \\
f_{6} & = & \eta(1 + k_{1} )L + \alpha(1 + k_{2} )I 
- \mu T - \omega T .
\end{eqnarray*}
At the disease-free equilibrium, the Jacobian matrix, $J|_{(DFE)},$ 
 of the model and the associated characteristics determinant 
 $|J -\lambda I| =0$  implies  
\begin{eqnarray*}
 && ( -(\mu + \upsilon) - \lambda)(-(\mu + \theta(1 -k_{2} ))
 - \lambda)( -(\beta + \mu + \eta(1 + k_{1} )
- \lambda )( -\alpha(1 + k_{2} ) \\ 
&& +  \gamma + \mu) - \lambda)( -(\mu + \delta)- \lambda)(
  -(\mu + \omega) - \lambda) = 0 .
\end{eqnarray*}
Therefore, 
\begin{eqnarray*}
\lambda_{1} & =& -(\mu + \upsilon), \\
\lambda_{2} & = & -(\mu + \theta(1 -k_{2} )) ,\\
\lambda_{3} & = & -(\beta + \eta(1 + k_{1} ) + \mu ) ,\\
\lambda_{4} & = & -( \gamma + \alpha(1 + k_{2} ) + \mu ) ,\\
\lambda_{5} & = & -(\mu + \delta) ,\\
\lambda_{6} & = & -(\mu + \omega) ;
\end{eqnarray*}
where $\lambda_{1} , \lambda_{2} , \lambda_{3} , 
\lambda_{4} , \lambda_{5}$ 
and   $\lambda_{6}$ corresponding to $\lambda_{j} , j = 1, 2, \cdots 6$
are the eigenvalues. Since all the eigenvalues are negative, 
we conclude that the model is asymptotically stable.
\end{proof}

We observe that this result indicates that  the control interventions 
in the model such as the use of condom, education
enlightenment programme and treatment are effective enough to almost 
wipe out the disease in a limited time.

One can also discuss the global stability of the Model.
Here, we use Lyapunov  direct method, (see e.g. \cite{Ver}), 
to derive sufficient condition for the global stability of the system.

\begin{theorem}\label{defs}\cite{Ver}. 
Let $D$ be an open subset of  $\mathbb{R}^6$ and 
$\dot{x} = f(x)$ a system of  differential equations for  $x \in D.$
Let $V : D \rightarrow  \mathbb{R}^6$ be a smooth function. 
Suppose $V$ is positive definite around a critical point 
$x^{\ast} \in D,$ i.e., $V(x^{\ast} ) = 0$ and 
$V(x) > 0,$ for all $x \in D$ with $x \neq x^{\ast} .$ 
Then, the critical solution $x = x^{\ast}$ is asymptotically stable 
in the sense of Lyapunov if the time derivative $\dot{V} < 0.$
\end{theorem}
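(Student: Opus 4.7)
The plan is to follow the classical two-stage proof of Lyapunov's direct method: first establish stability in the sense of Lyapunov (orbits starting near $x^{\ast}$ remain near), and then promote this to asymptotic stability by using the strict decay $\dot V < 0$.

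For the stability step, I would fix an arbitrary $\varepsilon>0$ small enough that the closed ball $\overline{B_{\varepsilon}(x^{\ast})}\subset D$. Because the sphere $S_{\varepsilon}=\{x:\|x-x^{\ast}\|=\varepsilon\}$ is compact and $V$ is continuous and strictly positive on it, the minimum $m=\min_{x\in S_{\varepsilon}} V(x)>0$ is attained. By continuity of $V$ at $x^{\ast}$ together with $V(x^{\ast})=0$, I can then pick $\delta\in(0,\varepsilon)$ with $V(x)<m$ whenever $\|x-x^{\ast}\|<\delta$. For any trajectory $x(t)$ with $\|x(0)-x^{\ast}\|<\delta$, the hypothesis $\dot V<0$ gives $V(x(t))\le V(x(0))<m$ for all $t\ge 0$. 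If the trajectory ever crossed $S_{\varepsilon}$ at some time $t_{1}$, we would have $V(x(t_{1}))\ge m$, a contradiction; therefore $\|x(t)-x^{\ast}\|<\varepsilon$ for all $t\ge 0$, which is exactly Lyapunov stability.

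For the asymptotic part, I would observe that $t\mapsto V(x(t))$ is strictly decreasing and bounded below by $0$, so it converges to some limit $L\ge 0$ as $t\to\infty$. Assuming for contradiction that $L>0$, positive definiteness of $V$ together with $V(x^{\ast})=0$ and continuity forces the trajectory into a compact annulus $A_{r,\varepsilon}=\{x:r\le \|x-x^{\ast}\|\le\varepsilon\}$ for some $r>0$. On this compact set the continuous function $\dot V$ attains a maximum, which by hypothesis is some value $-c<0$. Integrating yields $V(x(t))\le V(x(0))-c t$, eventually forcing $V<0$, which contradicts positive definiteness. Hence $L=0$, and by positive definiteness again $x(t)\to x^{\ast}$.

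The main obstacle I anticipate is the upgrade from the pointwise condition $\dot V<0$ to a uniform decay rate that rules out the trajectory lingering arbitrarily close to (but not reaching) $x^{\ast}$ with $V$ asymptoting to a positive limit; the argument succeeds only because compactness of the annulus converts strict negativity of $\dot V$ into a uniform bound $-c$. A minor technical remark I would flag is that the statement writes $V:D\to\mathbb{R}^{6}$, whereas a Lyapunov function is scalar valued, so I would interpret the codomain as $\mathbb{R}$ throughout the proof.
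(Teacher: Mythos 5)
Your proof is the standard two-stage argument for Lyapunov's direct method and it is correct: the compactness/minimum argument on the sphere $S_{\varepsilon}$ gives stability, and the uniform bound $\dot V\le -c$ on the annulus correctly upgrades this to asymptotic stability. The paper itself offers no proof of this statement --- it is quoted from \cite{Ver} with the proof deferred to that reference --- so there is nothing internal to compare against; your argument is essentially the textbook proof one finds there, and your remark that the codomain of $V$ should be $\mathbb{R}$ rather than $\mathbb{R}^{6}$ correctly identifies a typo in the statement. The only point you gloss over is forward completeness of the trajectory (needed to speak of $V(x(t))$ for all $t\ge 0$), but this follows from the fact that your stability step confines the orbit to a compact subset of $D$.
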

For Proof, see e.g. \cite{Ver}. We can now present one of the main 
results of this study.

\begin{theorem} \label{4} 
The equilibrium point solution is globally asymptotically stable. 
\end{theorem}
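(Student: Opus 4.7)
The plan is to apply Lyapunov's direct method (Theorem~\ref{defs}) by constructing an explicit Lyapunov function centred at the equilibrium point $x^{\ast}=(Q^{\ast},S^{\ast},L^{\ast},I^{\ast},R^{\ast},T^{\ast})$ of \eqref{3.9b}. A natural first candidate is the weighted quadratic form
\[
V(x)=\tfrac{1}{2}\sum_{i=1}^{6}a_{i}(x_{i}-x_{i}^{\ast})^{2},\qquad a_{i}>0,
\]
which trivially satisfies $V(x^{\ast})=0$ and $V(x)>0$ for every other $x$ in the positively invariant region $D$ produced by Proposition~\ref{T2}. If this choice turns out to be too rigid to control the nonlinear incidence, I would fall back on the Volterra-type functional $\sum_{i}a_{i}(x_{i}-x_{i}^{\ast}-x_{i}^{\ast}\ln(x_{i}/x_{i}^{\ast}))$, which is likewise positive definite on the positive orthant by the classical inequality $y-1-\ln y\geq 0$, with equality only at $y=1$.

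Next I would differentiate $V$ along solutions of \eqref{3.9b} to obtain $\dot V=\sum_{i}a_{i}(x_{i}-x_{i}^{\ast})\dot x_{i}$, substitute the right-hand sides of the six equations, and use the equilibrium identities (obtained by setting the right-hand sides to zero at $x^{\ast}$) to eliminate the constant recruitment terms $f\sigma$ and $(1-f)\sigma$. The only genuinely nonlinear contribution is the bilinear term produced by $\theta SI$; I would handle it by writing $SI-S^{\ast}I^{\ast}=(S-S^{\ast})I^{\ast}+S^{\ast}(I-I^{\ast})+(S-S^{\ast})(I-I^{\ast})$, so that its linear parts are absorbed into the quadratic matrix and only a single bounded cross-product remains. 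It is then a matter of choosing the weights $a_{i}$ so that the off-diagonal couplings induced by the cascade $Q\to S\to L\to I\to T\to R\to S$ are strictly dominated by the diagonal damping from $\mu$, $\upsilon$, $\beta$, $\eta(1+k_{1})$, $\gamma$, $\alpha(1+k_{2})$, $\omega$ and $\delta$. When $x^{\ast}=E_{0}$, the bound $R_{e}<1$ coming from \eqref{effrepno} is precisely what is needed to dominate the $\theta SI$ term against $(\beta+\mu+\eta(1+k_{1}))$ and $(\mu+\gamma+\alpha(1+k_{2}))$ in the $L$ and $I$ directions.

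The main obstacle is exactly this balancing act: selecting one set of positive weights $a_{i}$ that simultaneously renders the quadratic form negative definite in all six deviation variables despite the coupling through $\theta SI$. I plan to tame any residual cross terms with the elementary inequality $2ab\leq\varepsilon a^{2}+\varepsilon^{-1}b^{2}$, tuning $\varepsilon$ so that every diagonal coefficient remains strictly positive, which yields $\dot V<0$ for $x\neq x^{\ast}$. Once that bound is in place on the invariant region $D$ of Proposition~\ref{T2}, the hypotheses of Theorem~\ref{defs} are met and global asymptotic stability of $x^{\ast}$ follows at once.
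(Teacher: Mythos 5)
Your plan is a genuinely different route from the paper's: the paper does not build a six-variable quadratic or Volterra functional at all, but instead takes the linear combination $V=L+\sigma I$ of only the two infected compartments, with the weight $\sigma=\mu_{1}/\beta-\varepsilon$ chosen so that the coefficients of $L$ and of $I$ in $\dot V=\theta IS-\mu_{1}L+\sigma\beta L-\sigma\mu_{0}I$ are both negative; the coefficient of $L$ becomes $-\varepsilon\beta<0$ by construction, and the coefficient of $I$ becomes $\frac{\mu_{0}\mu_{1}}{\beta}(R_{0}-1)+\varepsilon\mu_{0}$, which is negative for small $\varepsilon$ precisely because $R_{0}<1$. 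That device reduces the whole verification to two scalar sign checks and avoids ever having to control the bilinear term $\theta SI$ against quadratic diagonal damping, because in a linear functional of $L$ and $I$ the term $\theta IS$ enters with the \emph{sign-definite} coefficient $I\geq 0$ and is bounded by $\theta I\cdot\sup S$ on the invariant region.

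The gap in your proposal is that the one step on which everything hinges --- exhibiting a concrete set of weights $a_{i}>0$ and verifying that the resulting quadratic form in the six deviation variables is negative definite --- is never carried out. You correctly identify this as "the main obstacle" and then defer it with "it is then a matter of choosing the weights," but for a six-dimensional system with a bilinear incidence term this existence is not automatic: after you expand $SI-S^{\ast}I^{\ast}$ the residual cross term $(S-S^{\ast})(I-I^{\ast})$ carries the factor $\theta$ and a weight difference $a_{2}-a_{4}$ (it enters the $S$-equation with a minus sign and the $L$-equation with a plus sign, so it does not cancel), and dominating it via $2ab\leq\varepsilon a^{2}+\varepsilon^{-1}b^{2}$ imposes inequalities among the $a_{i}$, the damping rates, and the bound $\sigma/\mu$ on the state variables that may or may not be simultaneously satisfiable for the parameter values in Table~\ref{t22}. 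Until those inequalities are written down and checked, the claim $\dot V<0$ is an aspiration, not a proof. If you want to complete the argument along your lines you must either display the weights explicitly and verify the $6\times 6$ negativity conditions (e.g.\ by Sylvester's criterion on the symmetrized coefficient matrix), or retreat to the paper's strategy of a linear functional in the infected compartments only, where $R_{e}<1$ directly delivers the sign conditions.
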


\begin{proof}
In 
 $\displaystyle{R_{0} = \frac{\beta \theta S}{(\alpha 
 + \gamma + \mu)(\beta + \eta + \mu )}},$  
let $\displaystyle{\mu_{0} = \alpha + \gamma + \mu}$ and 
$\displaystyle{\mu_{1} = \beta + \eta + \mu }.$

Then $\displaystyle{R_{0} = \frac{\beta \theta S}{\mu_{0} \mu_{1}} }$
so that  $\displaystyle{\frac{\mu_{0} \mu_{1} R_{0}}{\beta} 
= \theta S }.$
Since  $\displaystyle{R_{0} < 1},$ then  $\displaystyle{R_{0}
 - 1 < 0}$ and so $\displaystyle{\frac{\mu_{0} \mu_{1}}{\beta} 
 (R_{0} - 1) < 0 }.$

If we choose $\displaystyle{\varepsilon > 0}$ sufficiently small 
such that $\sigma  = \frac{\mu_{1}}{\beta} - \varepsilon$ 
where $\frac{\beta + \eta + \mu}{\beta}$
is the mean number of susceptible people that are infected by one 
infectious individual during the infectious time and $\sigma$ 
is the recruitment rate direct to susceptible class.  
 
Now, define a Lyapunov function $V(L ,I)$ by 
\begin{equation}\label{Lyap2}
V = L + \sigma I .
\end{equation}
Clearly,  $V$ is a positive definite function and so we can show 
 that its time derivative  $\dot{V}$ is negative definite. That is, 
 \begin{eqnarray*}
\frac{dV}{dt} &=& \frac{dL}{dt} + \frac{dI}{dt} \sigma \\
&=& \theta I S - (\beta + \eta + \mu )L + \sigma(\beta L -(\alpha 
+ \mu + \gamma)I ) \\
&=& \theta I S - \mu_{1} L + \sigma \beta L - \sigma \mu_{0} I <0.
\end{eqnarray*}
Take co-efficients of $L$ and $I$  to have that 
\begin{eqnarray*}
L_{c} &=& \sigma \beta - \mu_{1} \\
I_{c} &=& \theta S - \mu_{0} \sigma .
\end{eqnarray*}
From $L_{c}$ we have that  
\begin{eqnarray*}
(\frac{\mu_{1}}{\beta} - \varepsilon )\beta - \mu_{1} = 
- \varepsilon \beta < 0 \Rightarrow 
 \theta S - \mu_{0} (\frac{\mu_{1}}{\beta} - \varepsilon ) 
 &=& \theta S - \frac{\mu_{0} \mu_{1}}{\beta} +\varepsilon \mu_{0} \\
 \frac{\mu_{0} \mu_{1} R_{0}}{\beta} - \frac{\mu_{0} \mu_{1}}{\beta} +
 \varepsilon \mu_{0} 
&=& \frac{\mu_{0} \mu_{1}}{\beta} (R_{0} - 1) + 
\varepsilon \mu_{0} < 0 . 
\end{eqnarray*}

This implies, 
\[\frac{dV}{dt} = \frac{dL}{dt} + \sigma \frac{dI}{dt}  < 0.\]
Hence, $\dot{V} < 0$ as required.
\end{proof}

Finally, we conduct the sensitivity analysis of the model on the 
effective reproduction number.

\begin{theorem}\label{mainmain}
The dynamics of the model is variable and dependent on the 
waning rate, the control parameters and the interaction between the 
latent and infected classes.  
\end{theorem}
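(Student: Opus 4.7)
The plan is to establish Theorem \ref{mainmain} via a normalised forward sensitivity analysis of the effective reproduction number $R_{e}$ in \eqref{effrepno}, complemented by a brief interpretation of the signs of the resulting indices. Recall that, for a differentiable quantity $R_{e}$ depending on a parameter $p$, the normalised sensitivity index is
\begin{equation*}
\Upsilon^{R_{e}}_{p} \;=\; \frac{\partial R_{e}}{\partial p}\cdot \frac{p}{R_{e}} .
\end{equation*}
A parameter $p$ is then declared influential (and the dynamics \emph{variable in $p$}) precisely when $\Upsilon^{R_{e}}_{p}\neq 0$, the sign telling us whether increasing $p$ promotes or suppresses the infection. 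Since the theorem asserts variability with respect to the waning rate $\upsilon$, the control parameters $k_{1},k_{2}$, and the latent/infected interaction parameters (which enter $R_{e}$ through $\beta$, $\theta$, $\eta$ and $\alpha$), it suffices to compute these five indices and verify that each is nonzero at the disease-free equilibrium parameter values of Table \ref{dt}.

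First I would substitute the DFE value of $S$, namely $S^{\ast}=\frac{\sigma((\mu+\upsilon)-\mu f)}{\mu(\mu+\upsilon)}$ obtained from the lemma establishing the equilibria, into \eqref{effrepno} to obtain the closed form
\begin{equation*}
R_{e}(\upsilon,k_{1},k_{2},\beta,\theta,\ldots)\;=\;\frac{\sigma\beta\theta\bigl((\mu+\upsilon)-\mu f\bigr)}{\mu(\mu+\upsilon)(\beta+\mu+\eta(1+k_{1}))(\mu+\gamma+\alpha(1+k_{2}))} .
\end{equation*}
Second, I would differentiate this expression term by term. Direct computation gives
\begin{equation*}
\Upsilon^{R_{e}}_{k_{1}}=-\frac{\eta k_{1}}{\beta+\mu+\eta(1+k_{1})}<0,\qquad
\Upsilon^{R_{e}}_{k_{2}}=-\frac{\alpha k_{2}}{\mu+\gamma+\alpha(1+k_{2})}<0 ,
\end{equation*}
while for the waning rate,
\begin{equation*}
\Upsilon^{R_{e}}_{\upsilon}= \upsilon\left(\frac{1}{(\mu+\upsilon)-\mu f}-\frac{1}{\mu+\upsilon}\right)= \frac{\upsilon\,\mu f}{(\mu+\upsilon)\bigl((\mu+\upsilon)-\mu f\bigr)}>0 ,
\end{equation*}
and for the latent/infected interaction parameters,
\begin{equation*}
\Upsilon^{R_{e}}_{\theta}=1,\qquad \Upsilon^{R_{e}}_{\beta}=\frac{\mu+\eta(1+k_{1})}{\beta+\mu+\eta(1+k_{1})}>0 .
\end{equation*}
Third, I would substitute the numerical data from Table \ref{dt} to obtain explicit magnitudes of these indices, ranking them to identify the most sensitive parameter; this yields the concrete evaluation $R_{e}\approx 0.0970$ already noted in the paper and quantifies how perturbations propagate.

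The conclusion will then follow by observing that all five indices are non-zero, so $R_{e}$ is a nontrivial function of $\upsilon$, $k_{1}$, $k_{2}$, $\beta$ and $\theta$: increases in the controls $k_{1},k_{2}$ strictly decrease $R_{e}$, whereas an increase in the waning rate $\upsilon$ or in the latent--infected contact/progression rates $\theta,\beta$ strictly increases $R_{e}$, which is exactly the content of Theorem \ref{mainmain}. The main obstacle I anticipate is not the calculus itself but the bookkeeping: the parameter $\upsilon$ enters $R_{e}$ only implicitly through $S^{\ast}$, and one must be careful to use the correct DFE expression (consistent with the equilibria lemma) before differentiating, otherwise the sensitivity index for $\upsilon$ would spuriously vanish. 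Once this implicit dependence is handled correctly, the remaining steps reduce to logarithmic differentiation and sign inspection.
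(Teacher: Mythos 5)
Your proposal is correct and follows essentially the same route as the paper: both compute the normalised sensitivity indices $\frac{\partial R_{e}}{\partial p}\cdot\frac{p}{R_{e}}$ of the effective reproduction number at the disease-free equilibrium and read off the signs, and your closed-form expressions reproduce the paper's tabulated values (e.g.\ $+1$ for $\theta$, $+0.2903$ for $\upsilon$, $-0.1389$ for $k_{1}$). The only substantive difference is that the paper supplements the index table with numerical trajectory plots to illustrate the dependence of the full dynamics on $\upsilon$, $k_{1}$, $k_{2}$ and $\theta$, whereas you argue purely from the nonvanishing of the indices; given the informal nature of the statement, both suffice.
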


\begin{proof}
Following \cite{Gar}, we prove Theorem (\ref{mainmain}) using the 
sensitivity index function of the effective reproduction number given 
by  
\begin{equation}\label{senseindex}
\gamma_{R_{e}}(\cdot)  = \frac{\partial R_{e}}{\partial (\cdot )}  
\times \frac{(\cdot )}{R_{e}}.
\end{equation}
So, the variation of the treatment rate of the infected class is 
\[\gamma_{R_{e}}(\alpha ) = 
\frac{\alpha}{R_{e}} \times \frac{\partial}{\partial \alpha} 
R_{e}  = -(\frac{\alpha((\beta + \mu + 
\eta(1 + c))(1 + k))}{(\beta + \mu + \eta (1 + c))(\mu 
+ \Omega + \alpha (1 + k))} ).\]
Now taking  $\Omega = 0.01, \alpha = 0.2, \beta = 0.01, 
c = 0.5, \eta = 0.1, 
k = 0.8, \mu = 0.2],$ we have  
\[\gamma_{R_{e}}(\alpha ) = -0.6315789474. \]
Similarly, the variation of the rate at which the the recovered class becomes susceptible again is
\[\gamma_{R_{e}}(\delta ) = 
\frac{\delta}{R_{[{e}]}} \times \frac{\partial}{\partial \delta} 
R_{[{e}]}  = \lambda( (\frac{\alpha((\beta + \mu + \upsilon(1 + f))(1 + 
k))}{(\beta + \mu + \upsilon (1 + f))(\mu + \delta 
+ \theta (1 + k))} )).\]
Take $\delta = 0.4, \theta = 0.5, \beta = 
0.01, f = 0.91, \upsilon = 0.4, k = 0.8, \mu = 0.2$ to have  
\[\gamma_{R_{e}}(\delta )= +1.000000000 .\]
Continuing this way, we obtain the sensitivity index table of the model 
as Table  \ref{t22}:
\begin{table}[!!ht]
\centering
  \resizebox{0.8\textwidth}{!}{\begin{minipage}{\textwidth}
 \centering
\begin{tabular}{ |l |c| c| r||}
\hline
Parameters  & Sensitivity indices $\gamma_{R_{e}}(\cdot )$  \\
\hline
$\theta$  & $+1$ \\
\hline
$\sigma $ &   $+1$  \\
\hline
$\beta$ & $+ 0.97222222218$\\
\hline
$\upsilon $ &   $+0.2902711325$  \\
\hline
$\gamma $ &   $-0.01754385965$  \\
\hline
$ k_{1} $ &   $-0.1388888889$  \\
\hline
$\mu $ &   $-2.176703880$  \\
\hline
$\eta $ &   $-0.4166666667$  \\
\hline
$\alpha $ &   $-0.6315789474$  \\
\hline
$f $ &   $-0.4354066982$  \\
\hline
$k_{2} $ &   $-0.4280701754$  \\
\hline
\end{tabular}
\caption{Sensitive parameters and values.}
\label{t22}
\centering
\end{minipage}}
\end{table} \\

Graphically, we have the effect of increasing waning rate on the 
susceptible and immune classes as Figure \ref{image1} while Figure \ref{image2} shows the effect of a higher waning rate.
\begin{figure}[hbtp]
\centering
\includegraphics[scale=0.6]{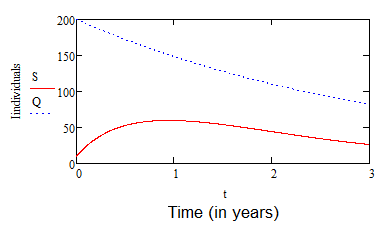}
\caption{Waning rate $\upsilon = 0.2$. } 
\label{image1}
\end{figure}
\begin{figure}[hbtp]
\centering
\includegraphics[scale=0.6]{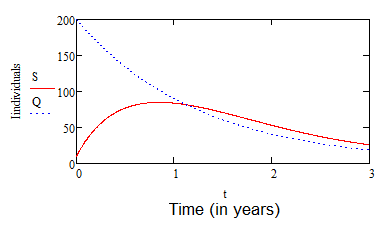}
\caption{Waning rate  $\upsilon = 0.6.$}
\label{image2}
\end{figure}
The graphs indicate that when the waning rate $\upsilon$ is 
low (i.e., $\upsilon = 0.2$), the passive immune population decreases 
exponentially with time, while when the waning rate is high, 
(i.e., $\upsilon = 0.6$), the passive immune 
population decreases faster  and varnishes with time. The continuous 
decay in the population of the immune class (Q) with time is due to the 
fact that the immunity conferred on the individuals in this class is 
temporal and hence, expires with time. However, the susceptible 
population increases  slower to the turning 
point at about one year and three months as the waning rate 
$\upsilon$ is low and increases faster  as the waning rate 
$\upsilon$ is high as shown in Figure \ref{image1} and  Figure 
\ref{image2} respectively. In both cases, the susceptible class later 
decreases with time due to the interaction among the latent, infected 
and the susceptible classes coupled with the natural mortality rate 
$\mu .$

Next figures show the effect when the control measures are removed and 
when they are introduced.  Figure \ref{image4a} shows when $k_{1} = 
k_{2} = 0$ that the susceptible individuals first increased after $45$ 
days due to recruitment into it,  and the trajectory decreases with 
time  as more people are getting infected with the disease. 
\begin{figure}[hbtp]
\centering
\includegraphics[scale=0.6]{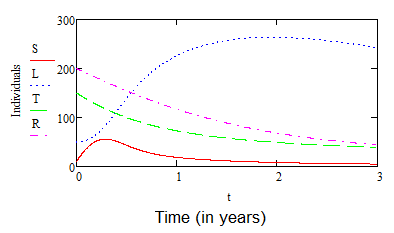} 
\caption{When $k_{1} = 0 = k_{2}.$}
\label{image4a}
\end{figure}\\
In the same way the latent and the infected 
population show exponential increase because more people are getting 
infected without control measure. However, the treated and the 
recovered population  show drastic exponential decay with time   as a 
result of no control measure.

Figure \eqref{image4b} suggests that susceptible population increases 
exponentially with time since more people get treated, recovered and 
join the susceptible class again because of increase in control 
measure. Also the recovered class increases with time for about $7$  
months as the control measure is high (i.e., $k_{1} = 0.7$ and 
$k_{2} =0.9$), and started decreasing again asymptotically with time 
because  recovery from gonorrhoea is with temporal immunity. However, 
the trajectory of the latent and the infected classes decreased  to 
zero with time due to increase in control measure.

\begin{figure}[hbtp]
\centering
\includegraphics[scale=0.6]{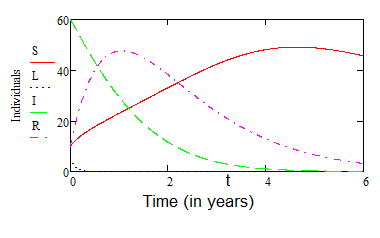}
\caption{Effect when $k_{1} = 0.7$ and $k_{2} =0.9$.}
\label{image4b}
\end{figure} 

Figure (\ref{image5}) indicates that when the interaction rate 
is low (i.e., $\theta =0.3$), the latent and the infected 
classes decrease exponentially with time, and even varnishes in the 
long run since there will be almost nobody to contact  and suffer 
the disease. It is also shown that when the interaction rate 
$\theta = 0$, the basic reproduction number of the disease becomes 
zero.  
\begin{figure}[hbtp]
\centering
\includegraphics[scale=0.6]{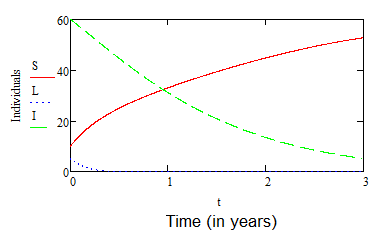}
\caption{Effect when  $\lambda = \theta I, ~~ \theta =0, ~~ 
\text{and when}~~  \theta =0.3.$}
\label{image5}
\end{figure} \\
We conclude that  
\[\lim_{\theta \rightarrow 0} R_{0} =
\lim_{\theta \rightarrow 0} \frac{(\beta \theta S)}{(\beta + \mu + \eta)(\mu + 
\gamma + \alpha)} = 0.\] 
At this point, the contact rate $\lambda$ becomes zero and hence, 
nobody suffers the disease. 
\end{proof}

\section{ Conclusion}
Based on the analysis and results of this work, we observed 
that the solution set of the state variables of the extended 
model are all positive for  $t\ge 0.$ This clearly showed that the 
population we are studying is not a zero population.  We obtained 
an important threshold parameter called the effective reproduction 
number $\mathbb{R}_{e}$ using the next generation matrix method. 
The result of the effective reproduction number is less than one. 
The local and global stability were investigated using Routh-Hurtz 
criteria and Lyapunov method respectively and both were asymptotically 
stable for $R_{e} < 1.$ 

From the graphical illustrations, we concluded that immune population 
continues to decay exponentially due to temporal immunity conferred 
on the individuals in the immune class. We also concluded that 
reproduction number of the infection grows when 
there is no control measure in the model and decays when control 
measure is applied in the model. Also from the analysis, it can be 
seen that the disease can  be totally eliminated from the community, 
because the sensitivity index shows that the lower the waning 
rate $\nu ,$ the more the exponential decrease in the passive immunity 
but the susceptible population increases  to the equilibrium and 
wanes asymptotically due to the presence of the control parameters 
and restricted interaction of the latent and infected classes. 
     
The sensitivity analysis of the effective reproduction 
number further shows that the rate of expiration of immunity 
$\upsilon,$ coupled with the interaction rate $\theta,$ recruitment 
rate $\sigma$ and the infectious rate $\beta$  are the most sensitive 
parameters to be targeted by way of intervention strategy.

\end{document}